%% file: main.tex
\pdftrailerid{}
\documentclass[10pt,twocolumn]{article}

\usepackage[margin=0.72in,columnsep=0.24in]{geometry}
\usepackage[T1]{fontenc}
\usepackage[utf8]{inputenc}
\usepackage{lmodern}
\usepackage{microtype}
\usepackage{amsmath,amssymb,amsthm}
\usepackage{algorithm}
\usepackage[noend]{algpseudocode}
\usepackage{booktabs}
\usepackage{tabularx}
\usepackage{graphicx}
\usepackage{tikz}
\usepackage{placeins}
\usepackage{enumitem}
\usepackage{xcolor}
\usepackage{xurl}
\usepackage{seqsplit}
\usepackage[numbers,sort&compress]{natbib}
\usepackage{hyperref}
\usetikzlibrary{arrows.meta,positioning,fit,backgrounds}

\definecolor{RCBlue}{HTML}{1769AA}
\definecolor{RCCharcoal}{HTML}{262B33}
\definecolor{RCPaper}{HTML}{F7F8FA}
\hypersetup{
  colorlinks=true,
  linkcolor=RCBlue,
  citecolor=RCBlue,
  urlcolor=RCBlue,
  pdftitle={Ready Cohorts: Bounding GPU Opportunity and Avoiding Host Round Trips in LLM-Agent Control},
  pdfauthor={Josef Chen},
  pdfsubject={A theory and measurement study of GPU execution for deterministic LLM-agent control},
  pdfkeywords={LLM agents, GPU systems, batching, deadline scheduling, CUDA graphs}
}
\setlist{leftmargin=*,nosep}
\graphicspath{{figures/}}

\newtheorem{proposition}{Proposition}
\newtheorem{lemma}{Lemma}

\newcommand{\pstar}{P^\star}
\newcommand{\claimid}[1]{\ignorespaces}
\newcommand{\sha}[1]{\texttt{\seqsplit{#1}}}
\input{generated/results-macros.tex}

\title{\vspace{-1.4em}\textbf{Ready Cohorts:}\\
Bounding GPU Opportunity and Avoiding Host Round Trips in LLM-Agent Control}
\author{Josef Chen\\
\small Independent Researcher}
\date{August 2026}

\begin{document}
\raggedbottom
\twocolumn[
  \begin{@twocolumnfalse}
  \maketitle
  \vspace{-1.25em}
  \begin{abstract}
    \input{sections/00_abstract}
  \end{abstract}
  \vspace{0.45em}
  \end{@twocolumnfalse}
]

\input{sections/01_introduction}
\input{sections/02_scope}
\input{sections/03_model}
\input{sections/04_exact_packing}
\input{sections/05_trace_method}
\input{sections/06_trace_results}
\input{sections/07_resident_method}
\input{sections/08_resident_results}
\input{sections/09_joint_interpretation}
\input{sections/10_related_work}
\input{sections/11_limitations}
\input{sections/12_reproducibility}
\input{sections/13_conclusion}

\bibliographystyle{plainnat}
\bibliography{references}

\clearpage
\onecolumn
\appendix
\input{sections/appendix}

\end{document}

%% file: generated/results-macros.tex
\newcommand{\TraceSessions}{851}
\newcommand{\TraceSpans}{9{,}031}
\newcommand{\TracePrimaryEvents}{651{,}123}
\newcommand{\TraceFixedPct}{30.19}
\newcommand{\TraceExactPct}{43.00}
\newcommand{\TraceUpperPct}{45.85}
\newcommand{\TraceGapClosurePct}{81.83}
\newcommand{\TraceFixedToExactPP}{12.81}
\newcommand{\TraceExactToUpperPP}{2.85}

\newcommand{\TracePrimaryExactMinPct}{42.43}
\newcommand{\TracePrimaryExactMaxPct}{43.41}
\newcommand{\TracePAtTenKThirtyTwoPct}{22.2}
\newcommand{\TracePAtTenKSixtyFourPct}{0.0}
\newcommand{\TracePAtHundredKThirtyTwoPct}{66.8}
\newcommand{\TracePAtHundredKSixtyFourPct}{66.0}
\newcommand{\TracePAtHundredKOneTwentyEightPct}{48.4}
\newcommand{\TracePAtHundredKTwoFiftySixPct}{43.0}
\newcommand{\TraceReplayCells}{180}
\newcommand{\TraceReplayRepetitions}{540}

\newcommand{\ResidentCells}{36}
\newcommand{\ResidentRows}{3{,}240}

\newcommand{\ResidentLegalInvocations}{14{,}557{,}440}
\newcommand{\ResidentRatioMin}{1.19}
\newcommand{\ResidentRatioMax}{2.39}
\newcommand{\ResidentPrimaryUsMin}{258}
\newcommand{\ResidentPrimaryUsMax}{309}
\newcommand{\HostPrimaryUsMin}{467}
\newcommand{\HostPrimaryUsMax}{625}
\newcommand{\PrimarySavedUsMin}{194}
\newcommand{\PrimarySavedUsMax}{363}
\newcommand{\PrimaryFloorRatioMin}{6.60}
\newcommand{\PrimaryFloorRatioMax}{8.17}
\newcommand{\NativePlacements}{5}
\newcommand{\NativeCells}{60}
\newcommand{\NativeRows}{12{,}000}
\newcommand{\NativeRatioMin}{1.07}
\newcommand{\NativeRatioMax}{1.99}

%% file: sections/00_abstract.tex
\claimid{RC-013}LLM-agent services repeatedly execute small deterministic
transitions between model and tool calls: route an outcome, update state, and
emit the next effect. We ask when this control path exposes enough concurrent
work for GPU execution, and what changes when a GPU-computed route decision
remains on device. We formalize the \emph{ready-cohort boundary} using
fixed-partition share $F$, exact offline share $\pstar$, local upper bound $U$,
and online achieved share $A$.

\claimid{RC-001}\claimid{RC-003}\claimid{RC-004}For zero service time,
unlimited capacity, and equal relative launch deadlines, a specialized dynamic
program computes $\pstar$ exactly. \claimid{RC-005}\claimid{RC-006}In a
prospectively frozen stationary Poisson replay of one pinned
\TraceSessions-session public trace panel, the primary condition at 100{,}000
target active sessions, $K=256$, and a 50 ms launch deadline gives
$F=\TraceFixedPct\%$, $\pstar=\TraceExactPct\%$, and $U=\TraceUpperPct\%$.
Exact packing recovers \TraceGapClosurePct\% of the opportunity lost at fixed
window boundaries. \claimid{RC-015}The outcome-derived route key is a
conditioning proxy, not proof of executable identity.

\claimid{RC-008}\claimid{RC-009}A separate mechanism study keeps a
GPU-computed binary decision on device instead of returning four bytes to the
host and redispatching. Across four named GPU placements, the device-resident
path is faster in all \ResidentCells{} configurations; within-placement
row-median ratios range from \ResidentRatioMin$\times$ to
\ResidentRatioMax$\times$. \claimid{RC-010}Across both admissible mechanisms,
all \ResidentLegalInvocations{} tested batched invocations match a separately
implemented host oracle. \claimid{RC-011}A fixed nested device graph that
removes no host decision is slower in all \NativeCells{} configurations across
five placements.

Together, the studies establish two measurable gates for GPU agent control:
deadline-feasible cohort supply and observation placement. They expose
schedulable work missed by fixed windows. A joined finite online runtime is
required to measure $A$, CPU displacement, and service-level benefit.

%% file: sections/01_introduction.tex
\section{Introduction}
\label{sec:introduction}

An agent runtime performs deterministic control work between model and tool
calls. It parses a typed outcome, advances a state machine, checks policy and
budget state, selects a route, and emits the next effect. Each transition is
small beside inference, but a service with many concurrent sessions executes
them continuously.

This control path is already a datacenter concern. A recent production and
open-source characterization found that agentic workflows repeatedly cross the
CPU--GPU boundary, place host orchestration on the critical path, and create
bursty CPU demand \cite{yang2026agenticarchitecture}. That result establishes
the pressure. It leaves a more specific systems question open: when can the
deterministic transition itself be grouped and executed profitably on a GPU?

Similarity across agents is not enough. A GPU route needs a cohort above its
hardware and runtime crossover, all members must share executable semantics,
and they must become ready before their launch deadlines. A decision returned
to the host after every step also incurs copy, synchronization, branch, and
redispatch overhead. We call the interface between cohort supply and
observation placement the \emph{ready-cohort boundary}.

For each route and hardware/runtime configuration, let $K$ begin a measured
safe suffix: every tested cohort from $K$ through a declared maximum clears a
chosen admissible baseline. A workload supplies events with release times,
launch deadlines, and declared grouping keys. The central workload question is
how much same-group work can be packed above $K$. The corresponding mechanism
question is how much of the control chain can proceed without exposing an
intermediate decision to the host.

We answer these questions in two experiments. A public agent-trace replay
compares fixed-window eligibility with an exact offline packing optimum and a
local upper bound. A separate CUDA study holds the state transition and route
bodies fixed while changing where one GPU-computed decision is observed. An
earlier device-launch design provides a negative control. The experiments are
kept numerically separate because the trace threshold $K=256$ is swept rather
than measured for the resident-policy source, and the 32-epoch mechanism
horizon is not inferred from the trace.

The paper makes four contributions:

\begin{enumerate}[label=\arabic*.]
  \item \textbf{A ready-cohort formulation.} We separate the hardware
  threshold $K$, fixed-partition share $F$, exact offline share $\pstar$, local
  bound $U$, and online achieved share $A$. Each quantity has one source and a
  defined inference boundary.
  \item \textbf{An exact opportunity instrument.} Under zero service,
  unlimited capacity, and equal relative launch deadlines, a specialized
  dynamic program computes $\pstar$. This gives a reproducible workload
  measurement without claiming algorithmic priority.
  \item \textbf{Trace evidence for hidden cohort supply.} In the frozen
  primary replay, exact packing raises eligible share from
  \TraceFixedPct\% to \TraceExactPct\% and recovers
  \TraceGapClosurePct\% of the fixed-window alignment gap. The full grid also
  identifies regimes with no qualifying cohort.
  \item \textbf{Cross-provider mechanism evidence.} Across four named GPU
  placements and \ResidentCells{} cells, retaining the tested decision on
  device is \ResidentRatioMin$\times$ to \ResidentRatioMax$\times$ faster
  than the matched host-mediated GPU path. A fixed nested graph loses in all
  \NativeCells{} cells across five placements, ruling out device launch alone
  as the explanation.
\end{enumerate}

These experiments resolve two prerequisite questions for a GPU control plane:
whether enough work can coexist under a declared grouping, and whether one
device-side decision avoids measurable host-mediated overhead. The next system
must join them with finite service and CPU fallback, then measure $A$, CPU use,
raw tail latency, exact effects, task utility, and inference interference.

\begin{figure*}[t]
  \centering
  \begin{tikzpicture}[
    node distance=0.75cm and 0.50cm,
    font=\sffamily\small,
    evidence/.style={draw=RCBlue!85!black,rounded corners=3pt,fill=RCBlue!6,
      text width=3.18cm,minimum height=1.12cm,align=center},
    next/.style={draw=RCCharcoal!70,rounded corners=3pt,fill=RCPaper,
      text width=3.18cm,minimum height=1.12cm,align=center,dashed},
    arrow/.style={-{Latex[length=2.2mm]},thick,draw=RCCharcoal!65}
  ]
    \node[evidence] (k) {\textbf{Threshold input $K$}\\swept in the trace study};
    \node[evidence,right=of k] (opp) {\textbf{Cohort supply}\\fixed $F$, exact $\pstar$, upper $U$};
    \node[next,right=of opp] (a) {\textbf{Online realization $A$}\\finite queues, service, fallback};
    \node[next,right=of a] (benefit) {\textbf{Service outcome}\\CPU, raw P99, utility, cost};
    \draw[arrow] (k) -- (opp);
    \draw[arrow] (opp) -- (a);
    \draw[arrow] (a) -- (benefit);
    \node[evidence,below=0.55cm of opp] (res)
      {\textbf{Observation placement}\\host round trip versus device decision};
    \draw[arrow] (res) -| (a);
  \end{tikzpicture}
  \caption{Evidence map. Solid boxes are the controlled or measured components
  in this paper; dashed boxes define the joined online experiment and its
  service-level outcomes. The shared notation connects the studies without
  treating the swept trace threshold as the resident source's measured
  crossover.}
  \label{fig:claim-boundary}
\end{figure*}

%% file: sections/02_scope.tex
\section{Execution and semantic scope}
\label{sec:scope}

\subsection{Unit of execution}

Our unit is a deterministic control transition inside an agent runtime. It is
smaller than an LLM request, a tool, or a complete workflow. A transition reads
typed state and an already available event, then produces new typed state, a
route, and possibly an effect descriptor. Model inference, tool execution, and
the privileged commit of external effects remain outside the unit.

This separation matters for both performance and safety. A GPU may compute
that an agent should call a tool, but the current paper does not ask the GPU to
hold cloud credentials, create a virtual machine, or commit a side effect. A
future runtime can keep pure state evolution on device while a CPU or DPU
validates and commits ordered effect descriptors.

\subsection{Compatibility and exactness}

Two events are compatible only when one implementation can process them
together without changing the declared semantics. The trace study cannot
verify that condition. Its narrowest grouping is an outcome-derived route-key
proxy, and even that key omits state-machine node, schema, arguments, policy
context, and the identities in multi-tool outcomes. Coarser event classes and a
pooled route are diagnostics. None of these groupings proves that fusion is
semantically admissible.

Correctness in the mechanism studies means exact equality of every declared
state field and the full decision sequence against a separately implemented
host oracle. It does not establish correct recovery, distributed ordering, or
real tool effects. Checksums are provenance guards rather than the definition
of equality.

\subsection{Launch deadlines, not completion SLOs}

Every trace event has a latest admissible launch time. This is a bound on how
long the scheduler may wait to form a cohort. Kernel service and queueing after
launch are zero in the mathematical model. The 50 ms primary value is therefore
not a 50 ms completion service-level objective. Finite service enters only in
the proposed online system.

\begin{table}[t]
  \centering
  \footnotesize
  \begin{tabularx}{\columnwidth}{@{}lX@{}}
    \toprule
    In scope & Out of scope in this paper \\
    \midrule
    Typed deterministic transition & Arbitrary Python on a GPU \\
    Route-key-conditioned cohorts & Verified semantic fusion \\
    Latest launch time & End-to-end completion SLO \\
    Device-resident pure decision & Privileged tool or VM lifecycle \\
    Exact tested trajectory & Distributed commit and recovery \\
    Named GPU placements & Hardware-population inference \\
    \bottomrule
  \end{tabularx}
  \caption{The execution contract. These exclusions are claim boundaries, not
  assumed properties of a future system.}
  \label{tab:scope}
\end{table}

%% file: sections/03_model.tex
\section{The ready-cohort model}
\label{sec:model}

\subsection{Events and the hardware threshold}

Let $E$ be a set of control events. Event $i$ has release time $t_i$, latest
admissible launch time $d_i\geq t_i$, and route $r_i$. A route in the
mathematical model means that one implementation can process the grouped events
without changing their declared transition semantics. A trace label does not
establish that condition by itself.

For route $r$ and hardware/runtime configuration $h$, define
$K_r(h,v,H)$ as the start of a measured safe suffix: every tested cohort size
from $K_r$ through a declared maximum beats a named admissible baseline. The
model extrapolates that monotone threshold and has no upper batch limit. The
arguments record state visibility $v$ and observation-free horizon $H$ because
transfer, synchronization, and temporal fusion can move the crossover. We
write $K_r$ when the configuration is fixed.

A batch $(\tau,B)$ is feasible when

\begin{equation}
  |B|\geq K_r,\quad r_i=r\ \forall i\in B,\quad
  t_i\leq\tau\leq d_i\ \forall i\in B.
  \label{eq:feasible-batch}
\end{equation}

A schedule contains feasible batches and assigns each event at most once. The
model allows batches to have more than $K_r$ events. It has no maximum batch
size, service time, or device-capacity constraint.

\subsection{Four shares with different meanings}

Let $\pi$ be a frozen partition of time into non-overlapping windows. Each
event belongs to a bucket defined by its window and route. If bucket $b$ has
$n_b$ events, fixed-partition eligibility is

\begin{equation}
  F(\pi,K)=\frac{1}{|E|}\sum_b n_b\,\mathbf{1}[n_b\geq K_{r(b)}].
  \label{eq:fixed}
\end{equation}

$F$ is exact only for schedulers constrained to $\pi$. Sliding deadlines can
join events across a fixed boundary, so $F$ is not a universal ceiling.

Let $\pstar$ be the maximum number of assigned events in any feasible offline
schedule, divided by $|E|$. It uses future knowledge and therefore measures
opportunity rather than online performance.

For route $r$, define the active count at time $\tau$ as

\begin{equation}
  Q_r(\tau)=|\{j:r_j=r,\ t_j\leq\tau\leq d_j\}|.
\end{equation}

Event $i$ is locally eligible if $Q_{r_i}(\tau)\geq K_{r_i}$ at some
$\tau\in[t_i,d_i]$. The local-overlap share is

\begin{equation}
  U=\frac{1}{|E|}\sum_i \mathbf{1}\!\left[
  \max_{\tau\in[t_i,d_i]}Q_{r_i}(\tau)\geq K_{r_i}\right].
  \label{eq:upper}
\end{equation}

$U$ may count incompatible opportunities that reuse the same events. It is an
upper bound, not necessarily an achievable schedule.

Finally, $A$ is the share accelerated within deadline by a specified finite
online runtime, including its queue, service, fallback, and capacity rules. To
compare $A$ with $\pstar$, each accelerated batch must obey the same route
grouping, threshold $K_r$, and launch deadline as the offline model; finite
service and capacity may impose stricter constraints. Its denominator is the
identical event set $E$ and retained horizon used by $\pstar$; late, missing,
failed, and fallback events remain in that denominator. The current paper does
not measure $A$.

\begin{proposition}[Boundary ordering]
\label{prop:ordering}
If every batch admitted by the frozen partition is admissible under the event
deadlines, and the batches counted by $A$ obey the same route, threshold,
deadline, and event-set constraints as the offline model, then
\begin{equation}
  F\leq\pstar\leq U,\qquad A\leq\pstar.
  \label{eq:ordering}
\end{equation}
\end{proposition}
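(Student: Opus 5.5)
The plan is to prove the three inequalities separately. Each follows by exhibiting a feasible schedule or by checking a per-event condition. All shares have denominator $|E|$, so it is enough to compare counts of covered events.

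\emph{Step 1: $F\leq\pstar$.} I will turn the frozen partition into a feasible offline schedule. For every bucket $b$ with $n_b\geq K_{r(b)}$, form one batch consisting of all $n_b$ events in that bucket. By construction every event in it has route $r(b)$, and the size condition of \eqref{eq:feasible-batch} holds. The hypothesis that every batch admitted by the frozen partition is admissible under the event deadlines supplies a launch time $\tau_b$ with $t_i\leq\tau_b\leq d_i$ for all $i$ in the bucket. This is exactly the remaining condition of \eqref{eq:feasible-batch}. The windows are non-overlapping, so buckets are disjoint and each event is assigned at most once. The model imposes no upper batch size or capacity limit, so this collection is a feasible schedule. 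It assigns $\sum_b n_b\mathbf{1}[n_b\geq K_{r(b)}]$ events, and the offline maximum is at least this number. Dividing by $|E|$ gives $F\leq\pstar$.

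\emph{Step 2: $\pstar\leq U$.} Fix an optimal offline schedule and any event $i$ it assigns, say to batch $(\tau,B)$ with route $r=r_i$. Every $j\in B$ has $r_j=r$ and $t_j\leq\tau\leq d_j$, so $B\subseteq\{j:r_j=r,\ t_j\leq\tau\leq d_j\}$. Hence $Q_{r}(\tau)\geq|B|\geq K_r$. Moreover $\tau\in[t_i,d_i]$, so the maximum in \eqref{eq:upper} is at least $K_{r_i}$ and event $i$ is counted in $U$. Therefore the set of assigned events injects into the set of locally eligible events, which gives $\pstar\leq U$. This step needs no disjointness, which is why $U$ may overcount.

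\emph{Step 3: $A\leq\pstar$.} By hypothesis, each batch that the online runtime counts as accelerated obeys the same route grouping, threshold $K_r$, and launch deadline. So each has a launch time meeting \eqref{eq:feasible-batch}. Finite service and capacity only remove options and never relax these conditions. An event is accelerated at most once, and late or fallback events are simply left unassigned. So the accelerated batches form a feasible offline schedule on the same event set $E$. Its count is at most the offline maximum, and the denominators coincide, so $A\leq\pstar$.

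The main obstacle I expect is Step 1's hypothesis. The fixed-window scheduler must correspond to a single common launch instant per bucket. An abstract window does not guarantee $\max_i t_i\leq\min_i d_i$ within a bucket. I will therefore read the hypothesis explicitly as the existence of such a $\tau_b$, and I will note that without it $F$ need not be bounded by $\pstar$. A smaller point is Step 3, where each online-accelerated event must appear in exactly one counted batch so that the at-most-once assignment requirement holds. I will state this as part of ``event-set constraints''.
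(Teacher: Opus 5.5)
Your proposal is correct and follows the paper's proof: your Steps 1--3 are its three sentences, namely that disjoint qualifying buckets form a feasible schedule, that every event of a feasible batch sees $Q_{r}(\tau)\geq|B|\geq K_r$ at its launch time, and that the online schedule belongs to the offline feasible set. Your two explicit readings of the hypotheses, a common launch time $\tau_b$ for each bucket and at-most-once assignment of accelerated events, spell out what the paper leaves implicit in them; the second is needed, because overlapping online batches could otherwise make $A$ exceed $\pstar$.
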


\begin{proof}
Every bucket with $n_b\geq K_{r(b)}$ is a feasible batch, and buckets are
disjoint, so the fixed schedule attains $F$. Every event in any feasible batch
overlaps at least $K_{r_i}$ same-route intervals at that batch's launch time,
so it is counted by $U$. An online schedule is one member of the offline
feasible set.
\end{proof}

\subsection{Compatibility has a measurable tax}

Suppose grouping $g_f$ refines $g_c$, so every fine bucket lies inside one
coarse bucket. For fixed $\pi$ and $K$,

\begin{equation}
  F(\pi,K,g_f)\leq F(\pi,K,g_c).
  \label{eq:grouping}
\end{equation}

This inequality measures fragmentation under a declared grouping. It does not
make coarse fusion admissible.
A pooled kernel may change control flow, memory access, actions, or numerical
trajectory. The route-key proxy is therefore the narrowest grouping in the
trace study.

\begin{table}[t]
  \centering
  \footnotesize
  \begin{tabularx}{\columnwidth}{@{}llX@{}}
    \toprule
    Symbol & Source & Meaning \\
    \midrule
    $K_r$ & hardware & route-specific profitable cohort \\
    $F$ & workload, policy & fixed-partition eligible share \\
    $\pstar$ & workload, offline & exact schedulable share \\
    $U$ & workload, local & necessary-overlap upper bound \\
    $A$ & online runtime & achieved accelerated share \\
    \bottomrule
  \end{tabularx}
  \caption{The quantities cannot be substituted for one another. In
  particular, $K=256$ in the trace sweep is not a measured $K_r$ for the
  resident-policy mechanism.}
  \label{tab:notation}
\end{table}

%% file: sections/04_exact_packing.tex
\section{Exact packing under equal relative deadlines}
\label{sec:exact}

The trace experiment uses $d_i=t_i+\delta$ for a common $\delta$. This special
case permits an exact dynamic program. We keep the assumptions visible because
the result does not extend unchanged to arbitrary deadlines or finite service.

\begin{proposition}[Deadline launch times]
\label{prop:deadlines}
Under zero service time and unlimited simultaneous capacity, some optimal
offline schedule launches every batch at an event deadline.
\end{proposition}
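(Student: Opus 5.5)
The plan is an exchange argument. Start from any optimal offline schedule and move each batch's launch time later, until it reaches the earliest deadline among its members. Because the objective counts only assigned events, moving batches never changes the value. The only thing to check is that every moved batch stays feasible in the sense of \eqref{eq:feasible-batch}.

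First, I would justify that the optimum exists and is attained by a finite schedule. Since $E$ is finite, there are finitely many ways to choose disjoint event sets $B$ that share a route and satisfy $|B|\geq K_r$. For each such set, a feasible launch time exists exactly when $\max_{i\in B}t_i\leq\min_{i\in B}d_i$, so the feasibility of a batch depends only on its event set. This reduces the problem to optimizing over finitely many families of disjoint event sets, so the maximum $\pstar|E|$ is attained.

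Second, fix an optimal schedule with batches $(\tau_1,B_1),\dots,(\tau_m,B_m)$. For each $k$, define $\tau_k'=\min_{i\in B_k}d_i$. Feasibility of the original batch gives $t_i\leq\tau_k\leq\tau_k'\leq d_j$ for all $i,j\in B_k$. Hence $t_i\leq\tau_k'\leq d_i$ for every member, and $\tau_k'$ is itself a deadline of some event in $B_k$. The route condition and the bound $|B_k|\geq K_r$ are unchanged, because the event sets are unchanged. The batches remain disjoint, so each event is still assigned at most once.

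The step that needs care is confirming that the retimed batches do not interact. This is where the hypotheses of zero service time and unlimited simultaneous capacity are used. Several batches may now share the same launch instant $\tau_k'$, possibly on different routes or even on the same route. The model has no capacity constraint, no service interval that could overlap, and no upper batch size, so batches launched at the same instant cannot conflict. I would state explicitly that the argument fails under finite service or capacity, since there, postponing a batch could collide with another one. This is consistent with the paper's warning that the result does not extend unchanged. With all launch times retimed, the new schedule assigns the same number of events, so it is optimal and launches every batch at an event deadline.
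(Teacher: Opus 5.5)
Your proposal is correct and is essentially the paper's argument: retime each batch to $d_{\min}=\min_{i\in B}d_i$, which keeps $t_i\leq\tau\leq d_i$ for every member, and use unlimited capacity to allow co-launched batches. The differences are cosmetic. You add an explicit finiteness argument that an optimum exists. You also keep same-route batches that land on the same instant as separate batches, which the model's definition of a schedule allows, whereas the paper merges them into one feasible batch, matching the one-batch-per-route-and-deadline form of the appendix formulation.
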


\begin{proof}
Take a nonempty feasible batch launched at $\tau$ and let $d_{\min}$ be the
smallest deadline among its events. Feasibility gives $\tau\leq d_{\min}$.
Every assigned event has release at most $\tau$ and deadline at least
$d_{\min}$, so moving the launch to $d_{\min}$ preserves every assignment.
Same-route batches moved to the same time can be merged; different routes may
co-launch under the unlimited-capacity assumption.
\end{proof}

\begin{lemma}[Contiguous block form]
\label{lem:blocks}
For one route with equal relative deadlines, there is an optimum whose batches
are disjoint contiguous blocks in sorted release order.
\end{lemma}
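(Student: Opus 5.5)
Sort the events of the route by release time, $t_1\le t_2\le\dots\le t_n$, breaking ties by index. Because $d_i=t_i+\delta$, deadlines are sorted in the same order. By Proposition~\ref{prop:deadlines}, we may start from an optimal schedule whose batches launch at event deadlines. A batch launched at $\tau$ can contain event $i$ exactly when $t_i\le\tau\le t_i+\delta$, that is, when $t_i\in[\tau-\delta,\tau]$. So the set of events a batch \emph{could} hold is a contiguous interval of the sorted order. The plan is an exchange argument: among optimal schedules, pick one minimizing the number of inversions, meaning pairs of assigned events whose sorted order disagrees with the order of their batches' launch times, together with unassigned events lying strictly inside a batch's index span. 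Then show that such a minimizer has no inversions.

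First we remove gaps. Suppose batch $B$ launched at $\tau$ has minimum index $a$ and maximum index $b$. Every index $k$ with $a<k<b$ satisfies $t_a\le t_k\le t_b$, so $t_k\in[\tau-\delta,\tau]$ and $k$ is eligible for $B$. If $k$ is unassigned, add it to $B$. This only increases the assigned count and keeps $|B|\ge K_r$, so by optimality no such $k$ exists; more simply, adding it keeps the schedule optimal while reducing the measure. Now suppose instead that $k$ lies in another batch $B'$ at $\tau'$. We swap $k$ with an endpoint of $B$. If $\tau'\le\tau$, swap $k$ and $a$: event $a$ moves to $B'$ and $k$ moves to $B$. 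We have $t_a\le t_k\le\tau'$, and $t_a+\delta\ge\tau-\delta+\delta=\tau\ge\tau'$, so $a$ is feasible in $B'$. Event $k$ is feasible in $B$ as shown above. If $\tau'>\tau$, swap $k$ and $b$ symmetrically: $t_b\le\tau<\tau'$ and $t_b+\delta\ge t_k+\delta\ge\tau'$. In either case the batch sizes are unchanged, so both batches remain above $K_r$ and the total count is preserved.

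The main obstacle is proving that each swap strictly decreases a well-founded potential rather than cycling. We would use the potential $\Phi=\sum_B(\max B-\min B)$ over index spans, or alternatively the count of interleaved batch pairs. In the case $\tau'\le\tau$, the swap raises $\min B$ and may enlarge the span of $B'$. To control this, we first break ties between equal launch times by merging the batches, and then order batches by launch time. Since earlier-launching batches should hold earlier-released events, we would instead measure inversions as pairs $(i,j)$ with $i<j$ where $i$'s batch launches strictly later than $j$'s. The swap above moves the smaller index to the earlier launch, which strictly reduces this inversion count; this is the standard adjacent-swap argument. Once no inversions remain and no gaps remain, each batch occupies an interval of indices, and the intervals are pairwise disjoint because the events are assigned at most once. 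This gives the contiguous block form.
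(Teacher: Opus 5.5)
Your proof is correct and is essentially the paper's argument: your endpoint swap is the paper's exchange of a crossed pair, moving the smaller index to the earlier launch, which is feasible because $t_i\le t_j\le\tau_{\mathrm{early}}\le\tau_{\mathrm{late}}\le t_i+\delta$; you then handle unassigned events inside a batch's span, which optimality rules out. You go slightly further than the paper by merging equal-time batches and by making termination explicit through the inversion count. The swapped pair need not be adjacent, but transposing any inverted pair still strictly lowers the inversion count, so your potential argument holds.
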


\begin{proof}
Sort releases so $t_1\leq\cdots\leq t_n$, and order batches by launch time.
Suppose $i<j$, event $i$ is assigned to a later batch, and event $j$ to an
earlier batch. Their feasibility gives
\[
 t_i\leq t_j\leq\tau_{\mathrm{early}}\leq
 \tau_{\mathrm{late}}\leq t_i+\delta\leq t_j+\delta.
\]
Exchanging the two assignments is therefore feasible. Repeated exchanges
remove crossings. An unassigned event between the first and last release of a
batch can then be added to that batch: its release is no later than the batch
launch, and its equal-length deadline is no earlier than the first event's
deadline. Thus each batch may be taken as a contiguous block.
\end{proof}

Let $D[j]$ be the maximum number of assigned events among the first $j$
releases of one route, with $D[0]=0$. A feasible block ending at $j$ starts at
some $i\leq j-K_r+1$ and satisfies $t_j-t_i\leq\delta$. Hence

\begin{equation}
\begin{aligned}
D[j]=\max\biggl\{&D[j-1],\\[-2pt]
 &\max_{\substack{i\leq j-K_r+1\\t_j-t_i\leq\delta}}
 \bigl(D[i-1]+j-i+1\bigr)\biggr\}.
\end{aligned}
\label{eq:dp}
\end{equation}

The inner term is $j+\max_i(D[i-1]-i+1)$ over a sliding interval of valid
starts. A grouped sort, route slices, a moving left pointer, and a monotone
deque give an $O(N\log N)$ evaluator including sorting. The frozen evaluator
used for the reported results instead scans the full group array once per route
and finds each left boundary by binary search. Its bound is
$O(NR+\sum_r n_r\log n_r)$ for $R$ routes and $n_r$ events per route, which is
quadratic in the worst case. This affects solver cost, not the returned optimum.
Back-pointers recover a maximum-cardinality witness.

\begin{algorithm}[t]
  \caption{Equal-relative-deadline packing for one route}
  \label{alg:packing}
  \begin{algorithmic}[1]
    \Require sorted integer releases $t_1,\ldots,t_n$, threshold $K$, deadline $\delta$
    \State $D[0]\gets 0$; initialize empty monotone deque $Q$
    \For{$j\gets1$ to $n$}
      \State $D[j]\gets D[j-1]$
      \State $i\gets j-K+1$
      \If{$i\geq1$}
        \State insert $(i,D[i-1]-i+1)$ into $Q$, removing smaller tails
      \EndIf
      \State $\ell\gets\Call{LowerBound}{t,t_j-\delta}$
      \State remove heads whose start index is smaller than $\ell$
      \If{$Q$ is nonempty}
        \State $D[j]\gets\max(D[j],j+Q.\mathrm{head.value})$
      \EndIf
    \EndFor
    \State \Return $D[n]$
  \end{algorithmic}
\end{algorithm}

\paragraph{Exact clock.}
The implementation rounds releases and relative deadlines to integer
nanoseconds and uses inclusive comparisons. The claim is exact on that clock.
It does not rely on a floating tolerance. The implementation agrees with
subset brute force on tiny instances, including route-specific thresholds and
adversarial boundary cases.

\paragraph{Role in this paper.}
For fixed $K$ and $\delta$, the trace oracle is equivalent to selecting the
maximum number of ordered points that can be partitioned into clusters of at
least $K$ points and diameter at most $\delta$. This is the fixed-radius
decision form of one-dimensional $r$-gathering with outliers: related work
often fixes an outlier budget and minimizes radius, whereas we fix the radius
and maximize retained points \cite{aggarwal2010anonymity,akagi2015rgather,
nakano2019rgather}. Classical scheduling also studies compatible batching
under release times and deadlines \cite{barnoy2009throughput}. We use
Equation~\eqref{eq:dp} only as an exact trace oracle and make no algorithmic-priority
claim.

%% file: sections/05_trace_method.tex
\section{Trace study}
\label{sec:trace-method}

\subsection{Pinned public source}

The source is the complete \texttt{tau2\_airline}, \texttt{tau2\_retail}, and
\texttt{tau2\_telecom} subset of the public Exgentic agent-trace dataset. The
dataset revision is
\sha{70036b93a04e61b0ea2706a68b962f4f26774587}
and the Parquet conversion revision is
\sha{f7c94012d0bfbf66fe4d6ed627699508bbb555ff}; all 19 retained shard hashes
match commit-resolved URLs \cite{exgentic2026traces}. The domain labels correspond to the airline, retail,
and telecom environments released with $\tau^2$-Bench
\cite{barres2026tau2bench}, which builds on the original $\tau$-bench framework
\cite{yao2024taubench}. The selected panel contains
\TraceSessions{} sessions and \TraceSpans{} recorded LLM spans across four
harnesses.

\claimid{RC-015}Each span completion becomes one candidate control event. Its route key is
derived from the recorded outcome: final or text, error, one named tool, or a
generic multi-tool outcome. It omits benchmark and harness semantics,
state-machine node, schema and version, arguments, policy context, and the tool
identities inside a multi-tool outcome. The extraction retains timestamps,
public identifiers, counts, lengths, and route labels. It excludes prompt text,
tool arguments, and tool results. Source Parquet files and both derived outputs
are content-hashed.

The selected data contain 70 route labels. Eight labels occur in more than one
benchmark domain, and 28 occur under more than one harness. The generic
\texttt{tool:<multi>} label alone covers 325 spans and 122 distinct recorded
tool-name encodings. The source audit also retains 52 failed-status spans, 52
nonpositive-duration spans, and 617 overlapping span starts. We do not remove
these observations after inspecting outcomes.

This construction treats a model completion as the point at which a control
transition could become ready. The route key is a declared conditioning proxy,
not proof of executable compatibility. The construction does not claim that
the recorded harnesses implemented our transition or that their span durations
equal model service times.

\subsection{Stationary replay}

We use the prospectively frozen stationary swarm model from trace replay 003. Session
arrivals follow a homogeneous Poisson process. A session template is sampled
uniformly from the fixed empirical panel. For target mean active population
$C$, the arrival rate is $C$ divided by mean template duration. Arrivals begin
one maximum template duration before measurement so that the retained
60-second interval is in steady state under the model. Fixed partitions are
origin-aligned half-open windows of width $\delta$; $F$ is conditional on that
phase, whereas $\pstar$ is not.

The exact-packing grid contains:

\begin{itemize}
  \item $C\in\{1{,}000,10{,}000,100{,}000\}$;
  \item $\delta\in\{10,25,50,100,250\}$ ms;
  \item pooled, event-class, and route-key grouping;
  \item $K\in\{32,64,128,256\}$; and
  \item three replay seeds from root seed 20260811.
\end{itemize}

\claimid{RC-007}Every generated event is retained. The three population values crossed with
three seeds produce nine generated swarms. Reusing each swarm across deadlines,
groupings, and thresholds yields \TraceReplayCells{} design cells and
\TraceReplayRepetitions{} cell-seed rows. For each row we compute $F$, $\pstar$,
$U$, the exact batch count, and

\begin{equation}
  G=\frac{\pstar-F}{U-F}\quad\text{when }U>F.
  \label{eq:gap-closure}
\end{equation}

$G$ describes how much of the local boundary-alignment opportunity is jointly
packable. The reported primary $G$ is the mean of the three per-seed ratios,
not the ratio formed from three mean shares.

\subsection{Prospective freeze and validity gates}

The exact solver, integer-clock contract, inputs, grid, primary cell, and
directional hypothesis were frozen before any exact trace outcome was
computed according to the artifact-recorded chronology. The plan was not
externally registered or independently timestamped. Earlier fixed-window and local-bound results were permitted design
data. Five invariants gate interpretation: $F\leq\pstar\leq U$; monotonicity
under grouping coarsening; monotonicity with $\delta$; antitonicity with
$K$; and equality of all three shares whenever the previously computed lower
and upper bounds coincide.

The primary cell was $C=100{,}000$, route-key grouping, $K=256$, and
$\delta=50$ ms.
The directional pilot hypothesis was only $\pstar>F$. The three seeds quantify
Monte Carlo variation under one fixed panel and arrival model. They are not
independent traces from a deployment population, so no population $p$-value or
confidence interval is reported.

\begin{table}[t]
  \centering
  \footnotesize
  \begin{tabularx}{\columnwidth}{@{}lX@{}}
    \toprule
    Frozen item & Value \\
    \midrule
    Source & Exgentic tau2 panel, two commit hashes \\
    Empirical unit & one of \TraceSessions{} session templates \\
    Event & recorded LLM-span completion \\
    Grouping & outcome-derived route-key proxy \\
    Replay & stationary Poisson session arrivals \\
    Clock & integer nanoseconds, inclusive endpoints \\
    Capacity/service & unlimited / zero \\
    Uncertainty & three seeds conditional on one panel \\
    \bottomrule
  \end{tabularx}
  \caption{Trace-study contract. The replay is a controlled load model rather
  than a model of production arrival statistics.}
  \label{tab:trace-contract}
\end{table}

%% file: sections/06_trace_results.tex
\section{Trace-conditioned opportunity}
\label{sec:trace-results}

\claimid{RC-007}All \TraceReplayRepetitions{} cell-seed rows from nine generated swarms pass
the five validity gates. In
the prospectively frozen primary cell, the replay generates a mean of
\TracePrimaryEvents{} events. Fixed windows admit \TraceFixedPct\%, while exact
sliding-deadline packing admits \TraceExactPct\%. The exact share ranges from
\TracePrimaryExactMinPct\% to \TracePrimaryExactMaxPct\% across the three
seeds. The local bound is \TraceUpperPct\%.

\begin{table}[t]
  \centering
  \footnotesize
  \input{generated/trace-primary-table.tex}
  \caption{Prospectively frozen primary trace cell: 100{,}000 target active sessions,
  route-key grouping, $K=256$, and a 50 ms launch deadline. Shares and counts are
  means over three replay seeds conditional on one fixed panel.}
  \label{tab:trace-primary}
\end{table}

Exact packing therefore gains \TraceFixedToExactPP{} percentage points over the
frozen partition and remains \TraceExactToUpperPP{} points below the local
upper bound. The mean of the per-seed gap closures is
\TraceGapClosurePct\%. This distinction matters operationally: the fixed
partition leaves model-admissible cross-boundary cohorts unused, while $U$ still
overcounts opportunities that cannot all be selected together.

\begin{figure*}[t]
  \centering
  \includegraphics[width=0.82\linewidth]{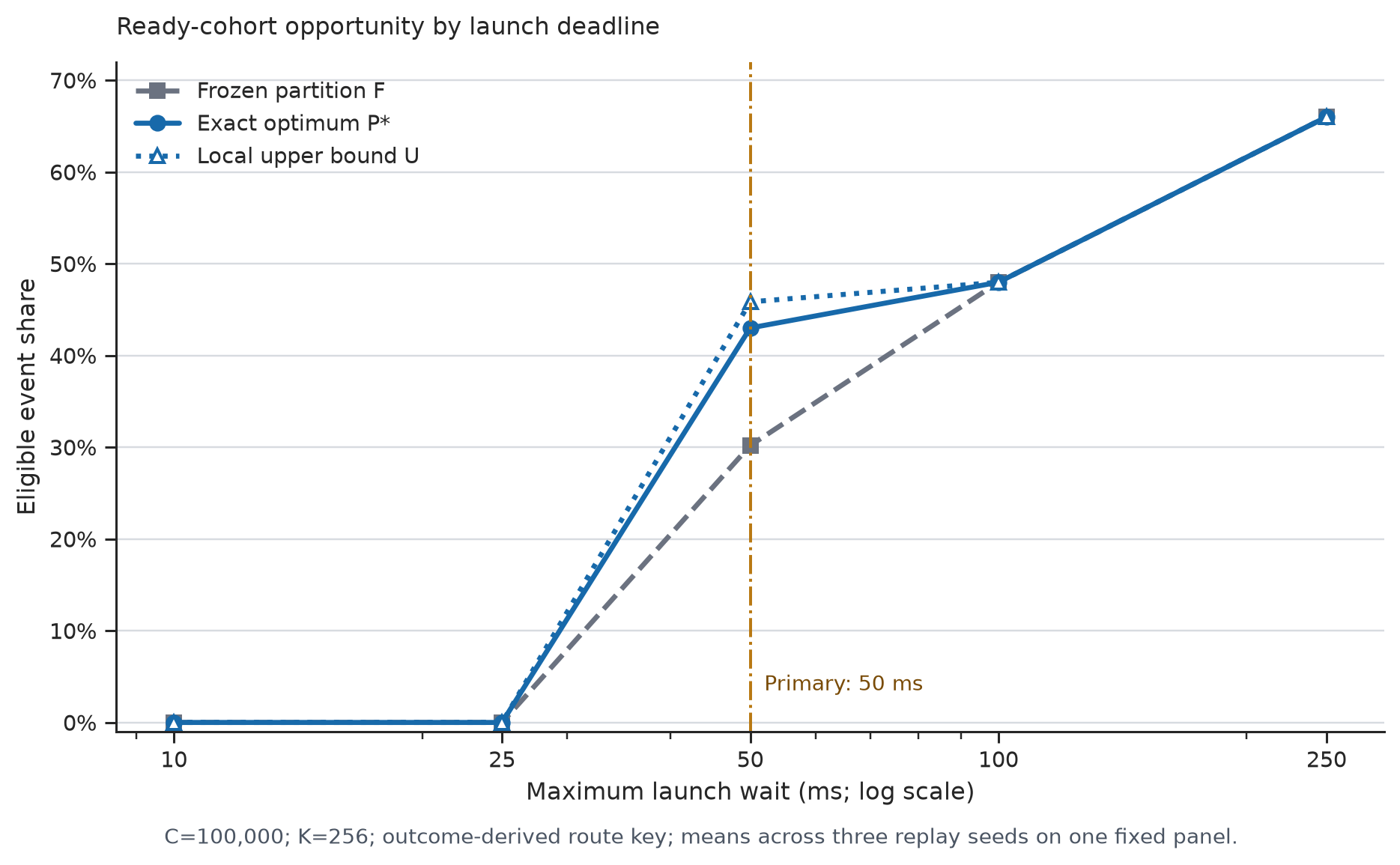}
  \caption{Route-key-conditioned opportunity at $C=100{,}000$ and $K=256$. The
  prospectively frozen primary deadline is 50 ms. Values are means across three replay
  seeds conditional on one panel and one arrival model, not population
  confidence intervals.}
  \label{fig:trace-frontier}
\end{figure*}

\subsection{Cohort supply collapses below the boundary}

Under route-key grouping at $K=256$, $\pstar$ is zero for every tested deadline when
$C\leq10{,}000$. Even at $C=100{,}000$, it remains zero at 10 and 25 ms. Thus
a large nominal swarm does not imply that a route sees a profitable cohort
inside a short launch budget.

Reducing $K$ changes the boundary sharply. At $C=10{,}000$ and 50 ms,
$\pstar$ is \TracePAtTenKThirtyTwoPct\% for $K=32$ but
\TracePAtTenKSixtyFourPct\% for $K=64$. At $C=100{,}000$ and 50 ms, the exact
shares for $K=32,64,128,256$ are respectively
\TracePAtHundredKThirtyTwoPct\%, \TracePAtHundredKSixtyFourPct\%,
\TracePAtHundredKOneTwentyEightPct\%, and
\TracePAtHundredKTwoFiftySixPct\%. Lowering the hardware threshold can matter
more than admitting another boundary-crossing cohort.

The surface gives an online runtime a measurable workload budget. Under the
model, $F$ is the fixed-window baseline, $\pstar$ is the most that any legal
scheduler can attain, and $\pstar-F$ is the headroom available to a better
packing policy. A real implementation must measure its route-specific $K$ and
report how much of that headroom becomes achieved share $A$.

%% file: generated/trace-primary-table.tex
\begin{tabular}{@{}lr@{}}
  \toprule
  Quantity & Primary value \\
  \midrule
  Fixed-partition share $F$ & 30.19\% \\
  Exact offline share $P^\star$ & 43.00\% \\
  Local-overlap bound $U$ & 45.85\% \\
  Alignment-gap closure & 81.83\% \\
  Mean generated events & 651{,}123 \\
  Mean exact batches & 1{,}046.3 \\
  \bottomrule
\end{tabular}

%% file: sections/07_resident_method.tex
\section{Device-resident decision study}
\label{sec:resident-method}

The trace study measures cohort supply. This study isolates the second gate:
the placement of a decision over state that is already resident on the GPU. It
is a mechanism experiment rather than an implementation of trace-driven route
compaction.

\subsection{Matched mechanisms}

The frozen CUDA program maintains a 16-byte synthetic state for each agent and
two route bodies, represented by per-epoch pre-instantiated branch or path graph
executables. At each epoch, a GPU predicate computes one global binary route
decision. All mechanisms use the same initialized state, predicate, route
functions, block size, and implied route sequence.

\paragraph{Host round trip.}
Launch the predicate graph, copy its four-byte result to pinned host memory,
synchronize, and launch the selected per-epoch route graph from the host.

\paragraph{Device resident.}
Launch one root graph. A one-thread selector reads the predicate on device and
tail-launches one of the uploaded per-epoch path graphs. Each nonfinal path
executes the selected body and the next predicate and selector, so execution
continues for $H$ epochs without exposing the decision to the host.

\paragraph{No-decision floor.}
Replay the oracle route sequence as one graph while omitting predicate and
selection work. This mechanism is a structural floor, not an admissible online
scheduler.

Graph creation, instantiation, and upload are outside steady-state timing for
all mechanisms. State reset, result copy, and validation are also outside. The
predicate copy and synchronization remain inside the host-roundtrip interval.
The primary metric is cohort-horizon wall time per batched invocation, averaged
within each recorded technical row.

\subsection{Grid and independent unit}

The full grid is $N\in\{256,2048,16384\}$ agents and
$H\in\{2,8,32\}$ epochs. Each mechanism-cell has five warmups, three
calibration samples, and 30 measured rows. One common batch count, calibrated
from the fastest mechanism, is applied to all mechanisms in a cell; each row
runs for at least 100 ms of aggregate timed work, subject to a frozen cap.

The four named placements are a local GTX 1660 Ti, a Modal L4, a RunPod L4,
and a Lambda H100 SXM5. The CUDA source hash is
\sha{4b5cdcb9496a734bd7801d5c419efb8eceb72fd6962800520101e89676d204da}.
Provider receipts, device UUIDs, images, binaries, raw files, and manifests are
bound where provider interfaces permit.

A named placement is the outer performance unit. Thirty rows reduce technical
timing noise inside one placement and are not independent hardware
replications. Effect magnitudes and directions are therefore reported by
placement, with L4 $n=2$, H100 $n=1$, and GTX $n=1$.

\subsection{Correctness contract}

A separately written host implementation computes both route functions and
the predicate without calling the device transition functions. After every
batched invocation, the experiment compares all four fields of every final
agent state. It also compares the complete epoch-by-epoch decision trace for
the host and resident mechanisms. Any mismatch, illegal launch, setup error,
OOM, crash, or nonpositive timing blocks performance interpretation.

The no-decision floor receives the oracle route sequence by construction. Its
final state is checked independently, but its decision sequence is not an
independent observation. This is why the paper reports admissible-mechanism
correctness separately from total tested work.

\subsection{Prospective-plan deviations}

A bounded local engineering smoke preceded the freeze. The full local run
used the same physical GPU shortly afterward. The prospectively frozen bounded cloud
stage allowed Modal plus one of RunPod or Lambda; the final report retains both
external providers as a descriptive scope expansion, one placement beyond the
frozen first-stage count. Neither fact strengthens the sampling claim.

The primary directional cell was $N=256,H=32$. Horizon scaling is reported as
an exploratory diagnostic because ``advantage increases'' was not given a
complete operational definition before the run.

\begin{figure*}[t]
  \centering
  \includegraphics[width=0.9\linewidth]{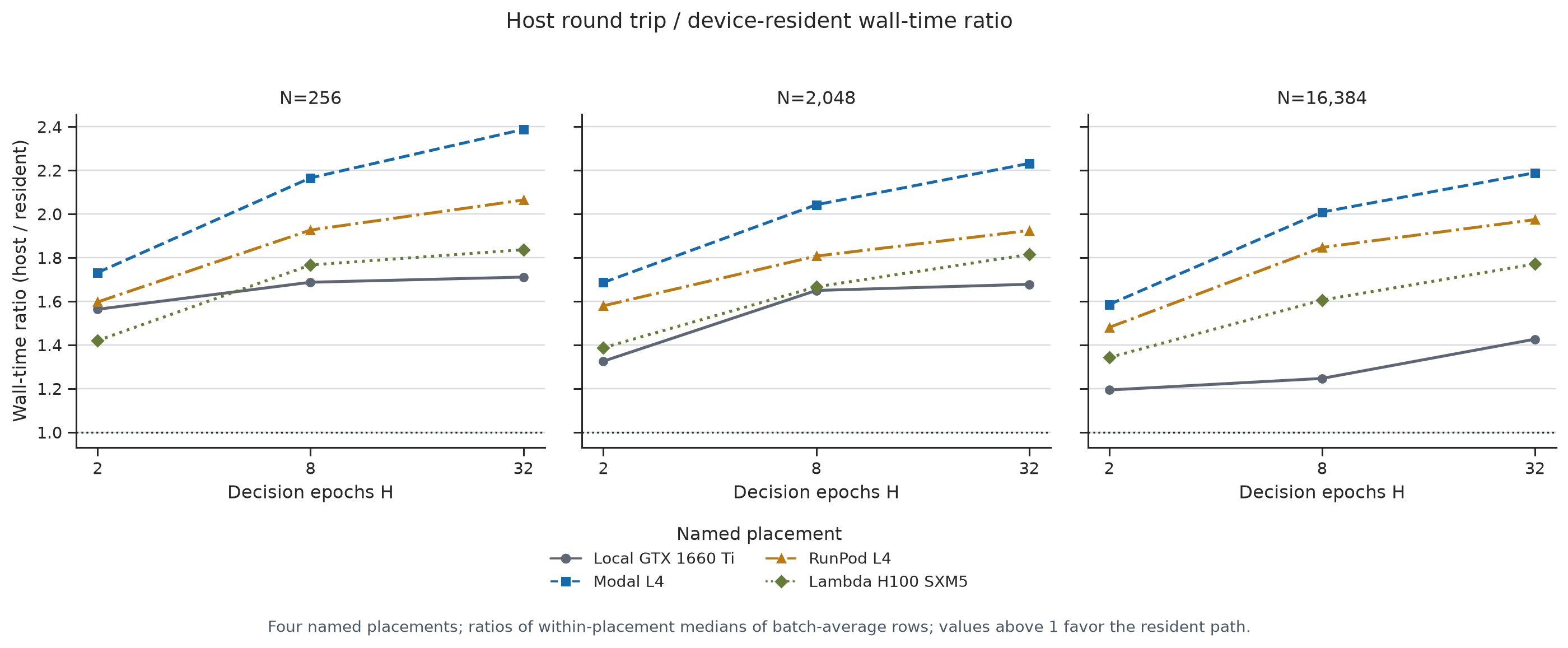}
  \caption{Within-placement ratio of batch-average row medians for the host
  round trip over the device-resident mechanism. Ratios above one favor the
  resident path. Placement is the outer unit; the plotted rows are technical
  repetitions rather than invocation-tail samples.}
  \label{fig:resident-horizon}
\end{figure*}

%% file: sections/08_resident_results.tex
\section{Mechanism results}
\label{sec:resident-results}

All \ResidentRows{} measured rows pass the frozen status, timing, duration,
source, provider, and correctness gates. Across the two admissible mechanisms,
\ResidentLegalInvocations{} tested batched invocations are field-exact and
decision-exact against the host oracle.

The device-resident path has a lower within-placement median than the host
round trip in all \ResidentCells{} placement-cells. The ratios range from
\ResidentRatioMin$\times$ to \ResidentRatioMax$\times$. At the primary
$N=256,H=32$ cell, the named-placement ratios are 1.71$\times$ on the local
GTX 1660 Ti, 2.39$\times$ on the Modal L4, 2.06$\times$ on the RunPod L4, and
1.84$\times$ on the Lambda H100.

\begin{table*}[t]
  \centering
  \footnotesize
  \input{generated/resident-primary-table.tex}
  \caption{Primary resident-policy cell, $N=256,H=32$. Times are medians of
  batch-average cohort-horizon wall time. The floor omits runtime predicate and
  selection work and is not a legal online policy.}
  \label{tab:resident-primary}
\end{table*}

At that cell, the device path takes \ResidentPrimaryUsMin{} to
\ResidentPrimaryUsMax{} $\mu$s across the four named placements, compared with
\HostPrimaryUsMin{} to \HostPrimaryUsMax{} $\mu$s for the host round trip. The
absolute difference is \PrimarySavedUsMin{} to \PrimarySavedUsMax{} $\mu$s per
32-epoch cohort invocation. The device path is still
\PrimaryFloorRatioMin$\times$ to \PrimaryFloorRatioMax$\times$ slower than the
oracle floor.
This undecomposed gap combines predicate, selector, and graph overhead, which
remains large beside the synthetic route bodies.

\begin{figure*}[t]
  \centering
  \includegraphics[width=0.82\linewidth]{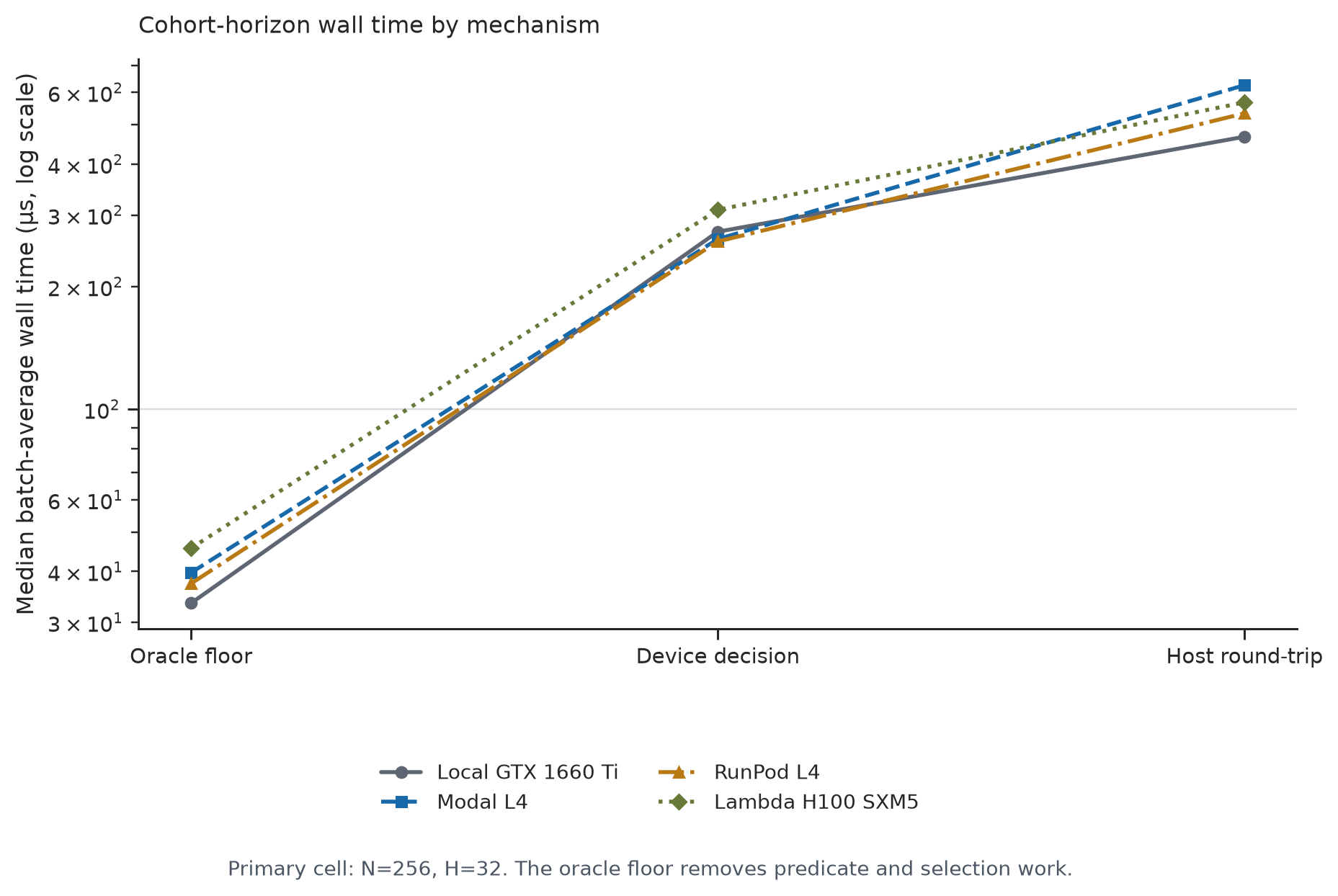}
  \caption{The three mechanisms at the primary cell. Device residency removes
  a matched host observation and dispatch epoch. The oracle floor removes the
  decision itself, so it is diagnostic rather than deployable.}
  \label{fig:resident-primary}
\end{figure*}

\subsection{Negative device-launch control}

The earlier native-dispatch calibration compares a pre-instantiated graph
launched from the host with a fixed child graph launched from a GPU kernel. The
child path retains the host root launch, adds a launcher kernel, and removes no
host decision. Across \NativePlacements{} named placements and all \NativeCells{}
cells, it is slower. The ratio of within-placement wall-time medians ranges
from \NativeRatioMin$\times$ to \NativeRatioMax$\times$. All \NativeRows{} rows
are field-exact.

This negative control separates device residency from device launch. Adding a
nested launcher while removing no host decision produces overhead rather than
an alternative positive treatment.

The timed comparison isolates one control-path choice. Cold graph construction,
ingress, effect egress, validation, and a tuned CPU implementation of a real
route body enter the joined runtime evaluation in
Section~\ref{sec:interpretation}.

\FloatBarrier

%% file: generated/resident-primary-table.tex
\begin{tabular}{@{}llrrrr@{}}
  \toprule
  Provider & GPU & Ratio & Resident ($\mu$s) & Host ($\mu$s) & Floor ($\mu$s) \\
  \midrule
  Local & GTX 1660 Ti & 1.71$\times$ & 273.0 & 467.2 & 33.4 \\
  Modal & L4 & 2.39$\times$ & 261.9 & 625.3 & 39.7 \\
  RunPod & L4 & 2.06$\times$ & 258.3 & 533.3 & 37.3 \\
  Lambda & H100 SXM5 & 1.84$\times$ & 309.0 & 567.5 & 45.6 \\
  \bottomrule
\end{tabular}

%% file: sections/09_joint_interpretation.tex
\section{Joining the evidence}
\label{sec:interpretation}

The two studies define a staged systems test. Cohort supply determines whether
any legal scheduler can find enough same-group work before launch deadlines.
Observation placement determines whether an already resident decision chain
pays a host copy, synchronization, branch, and redispatch epoch. A proposed GPU
route service must pass both gates. Failure at either gate is a concrete reason
to keep that transition on the CPU.

The current numbers do not form a service-level acceleration estimate and must
not be multiplied.
The trace study sweeps candidate thresholds, while the resident study fixes
three cohort sizes and a synthetic route body. The trace also contains no
measurement that 32 consecutive control epochs can execute without a model,
tool, policy, or commit observation. The value of the separation is that each
missing quantity now has a named measurement rather than being hidden inside a
broad GPU-agent claim.

\subsection{The next systems test}

The runtime implied by the boundary is a finite-capacity online route compactor.
It receives typed completion events, updates compact state, queues events by a
verified executable route and deadline, launches a route only above its
measured safe-suffix threshold $K_r$, and falls back to a tuned CPU path when a
deadline or capacity limit prevents batching. The GPU may emit ordered effect
descriptors; a privileged CPU or DPU remains the authority for external
effects.

Such a system adds one essential quantity. Let $z_i=1$ only when the frozen
online policy accelerates event $i$ with an exact output and observed device
start no later than its launch deadline; otherwise $z_i=0$. Then

\begin{equation}
  A=\frac{1}{|E|}\sum_{i\in E}z_i.
\end{equation}

Late, missing, failed, and fallback events have $z_i=0$ and remain in $E$. The
opportunity recovery above fixed windows is

\begin{equation}
  R_A=\frac{A-F}{\pstar-F},
  \label{eq:online-recovery}
\end{equation}

when $\pstar>F$. This turns the offline result into a direct runtime target. A
credible evaluation pairs $A$ and $R_A$ with raw invocation P99, deadline
misses, CPU core-seconds per event, exact trajectories, task utility, network
bytes, cost, and, for a shared inference GPU, TTFT and TPOT guardrails.

\subsection{Deployment paths}

The strongest architecture is a regional route service that aggregates typed
events from many inference and tool workers. Aggregation attacks the cohort
supply problem directly; measured ingress, state movement, and effect egress
then determine whether the control-path savings survive the network boundary.
A second path uses a low-priority queue on an inference GPU that is already
allocated. Its economic advantage is marginal device cost, while its decisive
constraint is noninterference with model TTFT and TPOT.

A dedicated control GPU is the weakest starting point because the present
experiments do not establish enough utilization or CPU displacement to pay for
one. GPU-initiated VM lifecycle operations are also outside the transition
unit: a device can emit a typed prewarm request, but a privileged CPU or DPU
control plane must authorize and commit it.

%% file: sections/10_related_work.tex
\section{Related work}
\label{sec:related}

\paragraph{GPU state machines and device-side control.}
FLAME GPU 2 maps agent memory, state transitions, population partitioning, and
communication to GPUs \cite{richmond2023flamegpu2}. CUDA Graphs provide replay,
conditional nodes, and device graph launch \cite{nvidia2026cudaprogrammingguide}.
GPUOS uses a persistent GPU worker kernel and atomic queues fed from a
host-managed work queue \cite{yang2026gpuos}; MPK and
Event Tensor compile tensor task graphs into persistent megakernels
\cite{cheng2026mpk,jin2026eventtensor}. These systems provide mechanisms for
device-side control and launch amortization. Our measurements ask a different
question: whether agent traces supply enough work under a declared grouping
before launch deadlines, and what one matched host observation costs.

\paragraph{Agent and compound-AI serving.}
A production and open-source characterization by Yang et al. identifies
repeated CPU--GPU crossings, host orchestration on the critical path, and
bursty CPU demand in agentic workflows; its Agora prototype pools and schedules
CPU resources while oversubscribing GPU memory
\cite{yang2026agenticarchitecture}. ThunderAgent represents workflows as LLM
Programs and jointly manages KV caches, system state, and tool environments
with a program-aware scheduler \cite{kang2026thunderagent}.
Parrot exposes semantic variables and application dataflow for cross-request
optimization \cite{lin2024parrot}. Agentix schedules agent programs using
execution progress \cite{luo2026agentix}; SAGA uses KV-cache-aware execution
graphs and session-affinity batching \cite{guo2026saga}; and InferCept manages
inference state across pauses for external interactions
\cite{abhyankar2024infercept}. Murakkab jointly maps declarative workflow
components to models and hardware under SLOs \cite{chaudhry2026murakkab}. The
MARS preprint coordinates GPU inference and CPU tool execution through an
external control plane \cite{wang2026mars}; the concurrent OpRAG preprint uses
persistent workers, bounded queues, and batched GPU operators for multi-stage
RAG \cite{sarker2026oprag}.
AgentServe co-schedules multi-agent inference on one GPU
\cite{zhang2026agentserve}. These systems operate at model-call, tool, or
workflow granularity. Our unit is the deterministic transition after a model or
tool event has completed.

\paragraph{Non-model CPU and GPU placement.}
Agentic CPU-GPU Scheduling profiles a complete AI tool and selects immediate
GPU, queued GPU, or CPU execution \cite{lu2026agenticcpugpu}. TokTier moves
stateful tokenization across CPU and GPU while enforcing exact token identity
\cite{zhang2026toktier}. Measurements of CPU-induced LLM slowdowns show that
CPU starvation can delay launches, communication, and tokenization even with
CUDA Graphs, and that adding CPU cores can be the economical remedy
\cite{chung2026cpuslowdowns}. These are the closest tool-level and methodological
comparisons. They make a tuned CPU implementation and CPU provisioning
mandatory baselines for the proposed online runtime.

\paragraph{Batching and clustering.}
Classical real-time scheduling studies job families, time windows, and
compatible batching \cite{barnoy2009throughput}. Joint batching work addresses
heterogeneous asynchronous arrivals and deadlines \cite{cang2024jointbatching},
while serial-batch scheduling includes minimum batch size, job families, and
release times \cite{huertas2025serialbatch}. One-dimensional $r$-gathering
studies minimum-occupancy clusters, outliers, contiguous solutions, and exact
line algorithms \cite{aggarwal2010anonymity,akagi2015rgather,
nakano2019rgather,sarker2021rgather}. Univariate microaggregation with
suppression supplies another dynamic-programming analogue
\cite{laszlo2013microaggregation}. GPU dynamic batching has also been modeled
with batch-dependent service, stochastic arrivals, latency, and power
\cite{xu2023smdpbatching}. The recurrence in this paper is a restricted trace
oracle inside that established territory, not an algorithmic-priority claim.

\paragraph{Positioning.}
\claimid{RC-012}
Prior agent systems characterize serving, scheduling, state management, and
tool execution, while batching theory characterizes which arrivals can be
served together under timing and occupancy constraints. The supply of
route-key-conditioned post-event transitions above a hardware crossover lies
at their interface. Observation placement adds the second axis: a cohort may
exist, yet a host round trip can still dominate a short control chain. This
paper gives those axes common boundary quantities and measures each with a
separate, auditable experiment.

%% file: sections/11_limitations.tex
\section{Scope of inference}
\label{sec:limitations}

\paragraph{Model and evidence join.}
The trace threshold is a swept candidate rather than the measured crossover of
the resident-policy source, and the mechanism's $H=32$ horizon is absent from
the trace model. The two result sets therefore cannot be multiplied. The
offline optimum also assumes future knowledge, zero service time, unlimited
capacity, no upper batch size, and deadlines on launch rather than completion.
The safe-suffix abstraction extrapolates beyond tested cohort sizes and does
not cover non-monotone benefit, arbitrary deadlines, or batch-size-dependent
service.

\paragraph{Trace scope and executable identity.}
The replay conditions on one \TraceSessions-session panel from three related
customer-service domains and imposes stationary Poisson arrivals. Its three
seeds measure Monte Carlo variation under that panel and model, not a workload
population. Bursts, correlated releases, and other corpora can move the
boundary. The outcome-derived route key omits state-machine node, schema,
arguments, policy context, and multi-tool identities, so it is a conditioning
proxy rather than verified semantic fusion. The packing arrays also omit a
per-session sequence constraint; overlapping spans and completion-order
inversions can affect the wider deadline surface.

\paragraph{Mechanism and sampling scope.}
Resident-policy-001 makes one global binary decision over a regular synthetic
state array. A route service still needs per-event compaction, variable route
bodies, ingress and egress, effect ordering, recovery, and CPU fallback. Setup,
state reset, result copy, and validation are outside timing. The four named
placements comprise two L4s, one H100, and one GTX; GPU, provider, host, image,
driver, and region are confounded. The local full run reuses the development
GPU, and one external placement extends the frozen first-stage scope. Timing
rows are technical repetitions, so the current performance result supports
named-placement effect sizes rather than a hardware-population inference.

\paragraph{Deployment and algorithmic scope.}
The mechanism metric is batch-average cohort-horizon wall time, not invocation
P99 or end-to-end task time. A tuned CPU implementation of the real transition,
CPU core-seconds, energy, cost, model throughput, TTFT, TPOT, task utility, and
external-effect reliability remain deployment measurements. The dynamic
program is a specialized exact instrument within mature batching, clustering,
and scheduling literatures; the paper makes no exhaustive algorithmic priority
claim.

%% file: sections/12_reproducibility.tex
\section{Reproducibility}
\label{sec:reproducibility}

The public code and release materials are at
\url{https://github.com/josefchen/ready-cohorts}. A processed-evidence mirror is
at \url{https://huggingface.co/datasets/josefchen/ready-cohorts}.

Raw measurements, processed tables, analysis code, preregistrations, and paper
outputs are separate artifacts. New executions receive new experiment and
placement identities; analysis does not overwrite a prior raw run.

All numerical macros and primary tables in the manuscript are generated by
\texttt{scripts/build\_paper\_artifacts.py}. The generator validates declared
hashes for the trace summary, repetition rows, and source dependencies,
including all 19 local Parquet shards; it also validates the resident-policy
contrasts and cell summary, and the native-dispatch contrasts. It checks the
primary cell, selected row and placement counts, the pointwise boundary
invariant, recorded trace-gate flags, and directional inequalities before
emitting a machine-readable paper-data manifest. Session and span counts come
from the source manifest rather than manuscript constants. This build validates
retained and processed evidence; it does not rerun cloud experiments or repeat
the remote retrieval.

The source manifest binds the trace extraction to dataset commit
\sha{70036b93a04e61b0ea2706a68b962f4f26774587} and Parquet conversion commit
\sha{f7c94012d0bfbf66fe4d6ed627699508bbb555ff}. The SHA-256 values of all 19
local shards match both the manifest and the commit-resolved remote bytes.
Derived features retain timestamps, counts, lengths, route labels, and public
identifiers, but no prompt text, tool arguments, or tool results.

The retained figures are hash-bound to their source table, notebook, and
notebook-builder script. They are not regenerated by the LaTeX target. The
claim-evidence map links the headline claims to proofs or artifacts. Placement
is the performance sampling unit; timing rows and batched invocations remain
technical repetitions.

For the packing result, the artifact separates the recurrence's attainable
$O(N\log N)$ grouped implementation from the frozen evaluator used here. The
frozen code rescans the full grouping array once per route and performs a binary
boundary search per event, for
$O(NR+\sum_r n_r\log n_r)$ time in the stated accounting. Integer nanosecond
normalization and brute-force tiny-instance tests define its exactness contract.

From the repository root, the manuscript is rebuilt and checked with:

\begin{verbatim}
make -C paper/arxiv clean all
.venv/bin/python scripts/check_arxiv_paper.py
\end{verbatim}

\paragraph{Generative AI disclosure.}
OpenAI Codex assisted with implementation, experiment orchestration, literature
search, quantitative checks, adversarial review, and language editing. Reported
numbers are regenerated from retained artifacts, and citations are checked
against primary sources. The author remains responsible for the scientific
claims and the released artifact.

%% file: sections/13_conclusion.tex
\section{Conclusion}
\label{sec:conclusion}

The ready-cohort boundary turns GPU agent control into a falsifiable systems
question. A workload must supply enough same-group events before their launch
deadlines, and the control path must account for where intermediate decisions
are observed. Under the frozen trace model, exact sliding-deadline packing
raises eligible share from \TraceFixedPct\% to \TraceExactPct\%, recovering
\TraceGapClosurePct\% of the opportunity lost by fixed windows. The full
surface also shows where cohort supply collapses to zero.

The mechanism study establishes the second gate. Keeping the tested binary
decision on device reduces cohort-horizon wall time in all
\ResidentCells{} cells across four named placements, with ratios from
\ResidentRatioMin$\times$ to \ResidentRatioMax$\times$. A nested device graph
that removes no host decision loses in all \NativeCells{} cells across five
placements. Device launch alone is therefore insufficient; the gain appears
only in the treatment that removes the matched host-mediated decision epoch.

These results provide the workload budget and the mechanism test for an online
route compactor. Its decisive measurements are achieved share $A$ relative to
$F$ and $\pstar$, raw P99, CPU core-seconds, exact effects, task utility, cost,
and shared-inference guardrails. If it cannot convert offline headroom into
online work or displace CPU without harming model service, the boundary rejects
the GPU design. If it can, the same quantities show exactly where GPU-resident
agent control belongs in a datacenter.

%% file: sections/appendix.tex
\section{Proof and evaluator scope}
\label{app:proof-scope}

\subsection{Binary-program reference formulation}

For arbitrary deadlines under the same zero-service and unlimited-capacity
assumptions, candidate launch times may be restricted to unique deadlines by
Proposition~\ref{prop:deadlines}. Let $x_{i\tau}$ assign event $i$ to deadline $\tau$,
let $y_{r\tau}$ indicate a batch for route $r$, and let $M_{r\tau}$ be the
number of route-$r$ event intervals containing $\tau$. One reference
formulation is

\begin{align}
\max\quad &\sum_i\sum_\tau x_{i\tau} \\
\text{s.t.}\quad
&\sum_\tau x_{i\tau}\leq1 &&\forall i,\\
&x_{i\tau}=0 &&\text{if }\tau\notin[t_i,d_i],\\
&K_r y_{r\tau}\leq\sum_{i:r_i=r}x_{i\tau} &&\forall r,\tau,\\
&\sum_{i:r_i=r}x_{i\tau}\leq M_{r\tau}y_{r\tau} &&\forall r,\tau,\\
&x_{i\tau},y_{r\tau}\in\{0,1\}.
\end{align}

The objective does not minimize batch count or wait among
maximum-cardinality schedules. The reported equal-deadline evaluator avoids
this quadratic candidate assignment and returns one witness.

\section{Evidence inventory}
\label{app:evidence}

\begin{table}[htbp]
  \centering
  \footnotesize
  \begin{tabularx}{\linewidth}{@{}p{0.19\linewidth} X p{0.27\linewidth}@{}}
    \toprule
    Layer & Current evidence and outer unit & Admissible inference \\
    \midrule
    Route-key trace packing & Replay-seed unit conditional on one panel;
      540 cell-seed rows from nine swarms; all invariants pass & Conditional opportunity surface \\
    Primary trace cell & Same outer unit;
      $F=\TraceFixedPct\%$, $\pstar=\TraceExactPct\%$, $U=\TraceUpperPct\%$ &
      Exact offline share under the model \\
    Resident decision & Named-placement unit; four placements and
      \ResidentCells{} directional cells & Named-placement effect direction and size \\
    Exactness stress & Tested-batched-invocation unit;
      \ResidentLegalInvocations{} admissible-mechanism invocations & Exactness on tested invocations \\
    Nested launch & Named-placement unit; five placements and
      \NativeCells{} negative cells & Fixed nested-launch calibration \\
    \bottomrule
  \end{tabularx}
  \caption{Evidence layers, outer units, and admissible inference. Population
  and end-to-end claims require the joined confirmation design.}
  \label{tab:evidence-inventory}
\end{table}

\section{Protocol deviations}
\label{app:deviations}

\paragraph{Resident policy.}

\begin{enumerate}
  \item A correctness and engineering smoke preceded the source freeze. The
  smoke and full local run used the same GPU UUID and are not independent
  placements.
  \item The cloud protocol authorized Modal plus one of RunPod or Lambda. Both
  external placements were ultimately retained. The additional placement is a
  disclosed descriptive scope expansion.
  \item Horizon-ratio monotonicity is exploratory because the preregistration
  did not fully specify its operational test.
  \item No unsupported account of failed Lambda provisioning attempts is used
  as scientific evidence. Retained receipts support the successful execution
  and final resource absence.
\end{enumerate}

\paragraph{Native dispatch.}
\label{app:native-deviation}

The native-dispatch plan proposed two fresh placements per available
provider/SKU and at least six H100 placements for nuisance estimation. That
layout was not completed. The retained calibration contains five named
placements, including one H100, three L4s, and one GTX, and supports only the
descriptive negative result reported here.

%% file: main.bbl
\begin{thebibliography}{30}
\providecommand{\natexlab}[1]{#1}
\providecommand{\url}[1]{\texttt{#1}}
\expandafter\ifx\csname urlstyle\endcsname\relax
  \providecommand{\doi}[1]{doi: #1}\else
  \providecommand{\doi}{doi: \begingroup \urlstyle{rm}\Url}\fi

\bibitem[Abhyankar et~al.(2024)Abhyankar, He, Srivatsa, Zhang, and
  Zhang]{abhyankar2024infercept}
Reyna Abhyankar, Zijian He, Vikranth Srivatsa, Hao Zhang, and Yiying Zhang.
\newblock Infercept: Efficient intercept support for augmented large language
  model inference.
\newblock In \emph{Proceedings of the 41st International Conference on Machine
  Learning}, volume 235 of \emph{Proceedings of Machine Learning Research},
  pages 81--95. PMLR, 2024.
\newblock URL \url{https://proceedings.mlr.press/v235/abhyankar24a.html}.

\bibitem[Aggarwal et~al.(2010)Aggarwal, Feder, Kenthapadi, Khuller, Panigrahy,
  Thomas, and Zhu]{aggarwal2010anonymity}
Gagan Aggarwal, Tom{\'a}s Feder, Krishnaram Kenthapadi, Samir Khuller, Rina
  Panigrahy, Dilys Thomas, and An~Zhu.
\newblock Achieving anonymity via clustering.
\newblock \emph{ACM Transactions on Algorithms}, 6\penalty0 (3):\penalty0
  1--19, 2010.
\newblock \doi{10.1145/1798596.1798602}.
\newblock URL \url{https://doi.org/10.1145/1798596.1798602}.

\bibitem[Akagi and ichi Nakano(2015)]{akagi2015rgather}
Toshihiro Akagi and Shin ichi Nakano.
\newblock On {$r$}-gatherings on the line.
\newblock In \emph{Frontiers in Algorithmics}, volume 9130 of \emph{Lecture
  Notes in Computer Science}, pages 25--32. Springer, 2015.
\newblock \doi{10.1007/978-3-319-19647-3_3}.
\newblock URL \url{https://doi.org/10.1007/978-3-319-19647-3_3}.

\bibitem[Bar-Noy et~al.(2009)Bar-Noy, Guha, Katz, Naor, Schieber, and
  Shachnai]{barnoy2009throughput}
Amotz Bar-Noy, Sudipto Guha, Yoav Katz, Joseph Naor, Baruch Schieber, and Hadas
  Shachnai.
\newblock Throughput maximization of real-time scheduling with batching.
\newblock \emph{ACM Transactions on Algorithms}, 5\penalty0 (2):\penalty0
  18:1--18:17, 2009.
\newblock \doi{10.1145/1497290.1497294}.
\newblock URL \url{https://doi.org/10.1145/1497290.1497294}.

\bibitem[Barres et~al.(2026)Barres, Dong, Ray, Si, and
  Narasimhan]{barres2026tau2bench}
Victor Barres, Honghua Dong, Soham Ray, Xujie Si, and Karthik Narasimhan.
\newblock {$\tau^2$-Bench}: Evaluating conversational agents in a dual-control
  environment.
\newblock In \emph{Proceedings of the 43rd International Conference on Machine
  Learning}, volume 306 of \emph{Proceedings of Machine Learning Research}.
  PMLR, 2026.
\newblock URL \url{https://openreview.net/forum?id=OC2z7iSQKa}.

\bibitem[Cang et~al.(2024)Cang, Chen, and Huang]{cang2024jointbatching}
Yihan Cang, Ming Chen, and Kaibin Huang.
\newblock Joint batching and scheduling for high-throughput multiuser edge {AI}
  with asynchronous task arrivals.
\newblock \emph{IEEE Transactions on Wireless Communications}, 23\penalty0
  (10):\penalty0 13782--13795, 2024.
\newblock \doi{10.1109/TWC.2024.3404811}.
\newblock URL \url{https://doi.org/10.1109/TWC.2024.3404811}.

\bibitem[Chaudhry et~al.(2026)Chaudhry, Choukse, Qiu, Goiri, Fonseca, Belay,
  and Bianchini]{chaudhry2026murakkab}
Gohar~Irfan Chaudhry, Esha Choukse, Haoran Qiu, Inigo Goiri, Rodrigo Fonseca,
  Adam Belay, and Ricardo Bianchini.
\newblock Murakkab: Resource-efficient agentic workflow orchestration in cloud
  platforms.
\newblock In \emph{20th USENIX Symposium on Operating Systems Design and
  Implementation (OSDI 26)}, pages 567--587, Seattle, WA, July 2026. USENIX
  Association.
\newblock URL
  \url{https://www.usenix.org/conference/osdi26/presentation/chaudhry}.

\bibitem[Cheng et~al.(2026)Cheng, Zhang, Zhou, Ji, Jiang, Zhao, Xiao, Ye,
  Huang, Lai, Jin, Hou, Wu, Dong, Yip, Wang, Yang, Miao, Chen, and
  Jia]{cheng2026mpk}
Xinhao Cheng, Zhihao Zhang, Yu~Zhou, Jianan Ji, Jinchen Jiang, Zepeng Zhao,
  Ziruo Xiao, Zihao Ye, Yingyi Huang, Ruihang Lai, Hongyi Jin, Bohan Hou,
  Mengdi Wu, Yixin Dong, Anthony Yip, Songting Wang, Wenqin Yang, Xupeng Miao,
  Tianqi Chen, and Zhihao Jia.
\newblock {MPK}: A compiler and runtime for {Mega-Kernelizing} tensor programs.
\newblock In \emph{20th USENIX Symposium on Operating Systems Design and
  Implementation (OSDI 26)}, pages 1909--1926, Seattle, WA, July 2026. USENIX
  Association.
\newblock ISBN 978-1-939133-55-7.
\newblock URL
  \url{https://www.usenix.org/conference/osdi26/presentation/cheng}.

\bibitem[Chung et~al.(2026)Chung, Jia, Jezghani, and
  Kim]{chung2026cpuslowdowns}
Euijun Chung, Yuxiao Jia, Aaron Jezghani, and Hyesoon Kim.
\newblock Characterizing {CPU}-induced slowdowns in multi-{GPU} {LLM}
  inference, 2026.
\newblock URL \url{https://arxiv.org/abs/2603.22774}.

\bibitem[{Exgentic}(2026)]{exgentic2026traces}
{Exgentic}.
\newblock Multi-benchmark {LLM} agent traces.
\newblock Hugging Face dataset, 2026.
\newblock URL
  \url{https://huggingface.co/datasets/Exgentic/agent-llm-traces/tree/f7c94012d0bfbf66fe4d6ed627699508bbb555ff}.
\newblock Revision 70036b93a04e61b0ea2706a68b962f4f26774587;
  CDLA-Permissive-2.0; accessed 2026-08-12.

\bibitem[Guo et~al.(2026)Guo, Wu, and Yiu]{guo2026saga}
Dongxin Guo, Jikun Wu, and Siu~Ming Yiu.
\newblock {SAGA}: Workflow-atomic scheduling for {AI} agent inference on {GPU}
  clusters, 2026.
\newblock URL \url{https://arxiv.org/abs/2605.00528}.

\bibitem[Huertas and Hentenryck(2025)]{huertas2025serialbatch}
Jorge~A. Huertas and Pascal~Van Hentenryck.
\newblock Constraint programming models for serial batch scheduling with
  minimum batch size.
\newblock \emph{Operations Research Perspectives}, 15:\penalty0 100352, 2025.
\newblock \doi{10.1016/j.orp.2025.100352}.
\newblock URL \url{https://doi.org/10.1016/j.orp.2025.100352}.

\bibitem[ichi Nakano(2019)]{nakano2019rgather}
Shin ichi Nakano.
\newblock A simple algorithm for {$r$}-gatherings on the line.
\newblock \emph{Journal of Graph Algorithms and Applications}, 23\penalty0
  (5):\penalty0 837--845, 2019.
\newblock \doi{10.7155/jgaa.00514}.
\newblock URL \url{https://doi.org/10.7155/jgaa.00514}.

\bibitem[Jin et~al.(2026)Jin, Hou, Wang, Lai, Chen, Ye, Cai, Dong, Cheng,
  Zhang, Zhao, Huang, Yang, Jiang, Oliaro, Ji, Miao, Grover, Mowry, Jia, and
  Chen]{jin2026eventtensor}
Hongyi Jin, Bohan Hou, Guanjie Wang, Ruihang Lai, Jinqi Chen, Zihao Ye, Yaxing
  Cai, Yixin Dong, Xinhao Cheng, Zhihao Zhang, Yilong Zhao, Yingyi Huang, Lijie
  Yang, Jinchen Jiang, Gabriele Oliaro, Jianan Ji, Xupeng Miao, Vinod Grover,
  Todd~C. Mowry, Zhihao Jia, and Tianqi Chen.
\newblock Event tensor: A unified abstraction for compiling dynamic megakernel.
\newblock In \emph{Proceedings of Machine Learning and Systems}, volume~8,
  pages 1917--1933. MLSys, 2026.
\newblock URL
  \url{https://proceedings.mlsys.org/paper_files/paper/2026/file/53d3f45797970d323bd8a0d379c525aa-Paper-Conference.pdf}.

\bibitem[Kang et~al.(2026)Kang, Li, Xu, Yang, Chen, Wang, Chen, Krishna, Xu,
  and Arora]{kang2026thunderagent}
Hao Kang, Ziyang Li, Weili Xu, Xinyu Yang, Yinfang Chen, Junxiong Wang, Beidi
  Chen, Tushar Krishna, Chenfeng Xu, and Simran Arora.
\newblock {ThunderAgent}: A simple, fast and program-aware agentic inference
  system, 2026.
\newblock URL \url{https://arxiv.org/abs/2602.13692}.

\bibitem[Laszlo and Mukherjee(2013)]{laszlo2013microaggregation}
Michael~J. Laszlo and Sumitra Mukherjee.
\newblock Optimal univariate microaggregation with data suppression.
\newblock \emph{Journal of Systems and Software}, 86\penalty0 (3):\penalty0
  677--682, 2013.
\newblock \doi{10.1016/j.jss.2012.10.901}.
\newblock URL \url{https://doi.org/10.1016/j.jss.2012.10.901}.

\bibitem[Lin et~al.(2024)Lin, Han, Zhang, Yang, Yang, Chen, and
  Qiu]{lin2024parrot}
Chaofan Lin, Zhenhua Han, Chengruidong Zhang, Yuqing Yang, Fan Yang, Chen Chen,
  and Lili Qiu.
\newblock Parrot: Efficient serving of {LLM-based} applications with semantic
  variable.
\newblock In \emph{18th USENIX Symposium on Operating Systems Design and
  Implementation (OSDI 24)}, pages 929--945, Santa Clara, CA, July 2024. USENIX
  Association.
\newblock ISBN 978-1-939133-40-3.
\newblock URL
  \url{https://www.usenix.org/conference/osdi24/presentation/lin-chaofan}.

\bibitem[Lu and Reda(2026)]{lu2026agenticcpugpu}
Tianxi Lu and Sherief Reda.
\newblock Agentic {CPU-GPU} scheduling for heterogeneous {AI} workloads, 2026.
\newblock URL \url{https://arxiv.org/abs/2607.22242}.

\bibitem[Luo et~al.(2026)Luo, Shi, Cai, Zhang, Wong, Wang, Wang, Huang, Chen,
  Gonzalez, and Stoica]{luo2026agentix}
Michael Luo, Xiaoxiang Shi, Colin Cai, Tianjun Zhang, Justin Wong, Yichuan
  Wang, Chi Wang, Yanping Huang, Zhifeng Chen, Joseph~E. Gonzalez, and Ion
  Stoica.
\newblock Agentix: An efficient serving engine for {LLM} agents as general
  programs.
\newblock In \emph{23rd USENIX Symposium on Networked Systems Design and
  Implementation (NSDI 26)}, pages 2443--2459, Renton, WA, May 2026. USENIX
  Association.
\newblock ISBN 978-1-939133-54-0.
\newblock URL \url{https://www.usenix.org/conference/nsdi26/presentation/luo}.

\bibitem[NVI(2026)]{nvidia2026cudaprogrammingguide}
\emph{{CUDA Programming Guide}}.
\newblock NVIDIA Corporation, 2026.
\newblock URL
  \url{https://docs.nvidia.com/cuda/cuda-programming-guide/04-special-topics/cuda-graphs.html}.
\newblock Chapter 4.2, CUDA Graphs; accessed 2026-08-12.

\bibitem[Richmond et~al.(2023)Richmond, Chisholm, Heywood, Chimeh, and
  Leach]{richmond2023flamegpu2}
Paul Richmond, Robert Chisholm, Peter Heywood, Mozhgan~Kabiri Chimeh, and
  Matthew Leach.
\newblock {FLAME GPU 2}: A framework for flexible and performant agent based
  simulation on {GPU}s.
\newblock \emph{Software: Practice and Experience}, 53\penalty0 (8):\penalty0
  1659--1680, 2023.
\newblock \doi{10.1002/spe.3207}.
\newblock URL \url{https://doi.org/10.1002/spe.3207}.

\bibitem[Sarker et~al.(2021)Sarker, Sung, and Rahman]{sarker2021rgather}
Anik Sarker, Wing-Kin Sung, and M.~Sohel Rahman.
\newblock A linear time algorithm for the {$r$}-gathering problem on the line.
\newblock \emph{Theoretical Computer Science}, 866:\penalty0 96--106, 2021.
\newblock \doi{10.1016/j.tcs.2021.03.015}.
\newblock URL \url{https://doi.org/10.1016/j.tcs.2021.03.015}.

\bibitem[Sarker et~al.(2026)Sarker, Staylor, Alsaadi, von Laszewski, Jha, and
  Fox]{sarker2026oprag}
Arup~Kumar Sarker, Mills Staylor, Aymen Alsaadi, Gregor von Laszewski, Shantenu
  Jha, and Geoffrey Fox.
\newblock {OpRAG}: A resource-deterministic runtime for gpu-backed multi-stage
  {RAG} workflows, 2026.
\newblock URL \url{https://arxiv.org/abs/2608.08340}.

\bibitem[Wang et~al.(2026)Wang, Ye, Xu, Guo, Wei, Wang, Li, Chen, Li, Zhuo, and
  Chen]{wang2026mars}
Yifei Wang, Hancheng Ye, Yechen Xu, Cong Guo, Chiyue Wei, Qinsi Wang, Dongting
  Li, Tingjun Chen, Hai Li, Danyang Zhuo, and Yiran Chen.
\newblock {MARS}: Efficient, adaptive co-scheduling for heterogeneous agentic
  systems, 2026.
\newblock URL \url{https://arxiv.org/abs/2604.26963}.

\bibitem[Xu et~al.(2023)Xu, Sun, Zhou, and Niu]{xu2023smdpbatching}
Yaodan Xu, Jingzhou Sun, Sheng Zhou, and Zhisheng Niu.
\newblock {SMDP}-based dynamic batching for efficient inference on {GPU}-based
  platforms.
\newblock In \emph{2023 IEEE International Conference on Communications (ICC)},
  pages 5483--5489. IEEE, 2023.
\newblock \doi{10.1109/ICC45041.2023.10278962}.
\newblock URL \url{https://doi.org/10.1109/ICC45041.2023.10278962}.

\bibitem[Yang et~al.(2026{\natexlab{a}})Yang, Liu, Zhang, and
  Stojkovic]{yang2026agenticarchitecture}
Jirong Yang, Peizhe Liu, Chaojie Zhang, and Jovan Stojkovic.
\newblock Architectural implications of agentic {AI} workflows,
  2026{\natexlab{a}}.
\newblock URL \url{https://arxiv.org/abs/2608.04458}.

\bibitem[Yang et~al.(2026{\natexlab{b}})Yang, Gao, Zhou, Gan, Zheng, and
  Quinn]{yang2026gpuos}
Yiwei Yang, Xiangyu Gao, Yuan Zhou, Yuhang Gan, Yusheng Zheng, and Andi Quinn.
\newblock {GPUOS}: A {GPU} operating system primitive for transparent operation
  fusion, 2026{\natexlab{b}}.
\newblock URL \url{https://arxiv.org/abs/2604.17861}.

\bibitem[Yao et~al.(2024)Yao, Shinn, Razavi, and Narasimhan]{yao2024taubench}
Shunyu Yao, Noah Shinn, Pedram Razavi, and Karthik Narasimhan.
\newblock {$\tau$-bench}: A benchmark for tool-agent-user interaction in
  real-world domains, 2024.
\newblock URL \url{https://arxiv.org/abs/2406.12045}.

\bibitem[Zhang et~al.(2026)Zhang, Yan, Yang, and Yuan]{zhang2026agentserve}
Yuning Zhang, Yan Yan, Nan Yang, and Dong Yuan.
\newblock {AgentServe}: Algorithm-system co-design for efficient agentic {AI}
  serving on a consumer-grade {GPU}, 2026.
\newblock URL \url{https://arxiv.org/abs/2603.10342}.

\bibitem[Zhang and Cao(2026)]{zhang2026toktier}
Zhenyu Zhang and Zhichao Cao.
\newblock {TokTier}: Exact stateful {CPU+GPU} tokenization for agentic {LLM}
  serving, 2026.
\newblock URL \url{https://arxiv.org/abs/2607.29678}.

\end{thebibliography}
